\documentclass{llncs}
\usepackage{amsmath,amssymb,latexsym,epsfig,llncsdoc,graphicx,wrapfig,subfigure,ifthen}
\usepackage{amsmath}
\usepackage{amsfonts}
\usepackage{amssymb}
\usepackage{amssymb}
\usepackage{setspace,cite,multirow}

\newtheorem{lem}{Lemma}

\newtheorem{thm}{Theorem}

\date{}

\title{Cutting a Convex Polyhedron Out of a Sphere %\\(Extended Abstract)
\thanks{An earlier version appeared in Proc. WALCOM 2010, LNCS, Springer, 2010.}}
%\title{An $O(\log n)$-approximation algorithm for cutting\\a convex polyhedron out of a sphere\thanks{An extended abstract
%accepted for presentation in JCCGG 2009 without any formal proceedings.}}
\author{Syed Ishtiaque Ahmed, Masud Hasan, and Md. Ariful Islam}

\institute{
Department of Computer Science and Engineering\\
Bangladesh University of Engineering and Technology\\
Dhaka-1000, Bangladesh\\
\texttt{http://www.buet.ac.bd/cse}\\ 
\email{\texttt{ishtiaque@csebuet.org, masudhasan@cse.buet.ac.bd, arifulislam@csebuet.org}}
}

\pagestyle{plain}

\begin{document}

\maketitle{}

\begin{abstract}
Given a convex polyhedron $P$ of $n$ vertices inside a sphere $Q$, we give an $O(n^3)$-time algorithm that cuts $P$ out of $Q$
by using guillotine cuts and has cutting cost $O(\log^2 n)$ times the optimal.

\smallskip

\noindent{\bf Keywords:} Approximation algorithm, guillotine cut, polyhedra cutting
\end{abstract}

\section{Introduction} 

The problem of cutting a convex polygon $P$ out of a piece of planar material $Q$ 
($P$ is already drawn on $Q$)
with minimum total cutting length is a well studied problem in computational geometry.
The problem was first introduced by Overmars and Welzl in 1985~\cite{aa}
but has been extensively studied in the last decades~\cite{AIH,aa,ab,ac,ae,ag,ah,ai,am,JK03}
with several variations, such as $P$ and $Q$ are
convex or non-convex polygons, $Q$ is a circle, and the cuts are line cuts or ray cuts.
%The results include: indication to the hardness of optimality,
%several $O(\log n)$ and constant factor approximation algorithms and a PTAS.
%See~\cite{AIH} for a summary of these results.   
%\subsection{Known results}
This type of cutting problems have many industrial applications such as in metal sheet cutting, paper cutting,
furniture manufacturing, ceramic industries, fabrication, ornaments, and leather industries.
Some of their variations also fall under \emph{stock cutting problems}~\cite{ab}.

If $Q$ is another convex polygon with $m$ edges, this problem with line cuts 
has been approached in various ways~\cite{aa,ab,ac,ad,ae,af,ah,ai}.
%by many researchers in computational geometry community~\cite{aa,ab,ac,ad,ae,af,ah,ai}.
%But if $Q$ is a circle, the problem is not approached by anyone to our knowledge
%Overmars and Welzl first introduced this problem in 1985~\cite{aa}.
%of cutting a convex polygon out of another convex polygon optimally with line cuts~\cite{aa}.
If the cuts are allowed only along the edges of $P$,
Overmars and Welzl~\cite{aa} proposed an $O(n^{3}+m)$-time algorithm for this problem with optimal cutting length,
where $n$ is the number of edges of $P$.
%However they proved that such cutting sequence may not always be optimal.
The problem is more difficult if the cuts are more general, i.e.,
they are not restricted to touch only the edges of $P$.
In that case, Bhadury and Chandrasekaran showed that the problem has optimal solutions
that lie in the algebraic extension of the input data field~\cite{ab},
and due to this algebraic nature of this problem,
an approximation scheme
%\footnote{A \emph{$\rho$-approximation algorithm}
%(similarly, an \emph{approximation scheme}) has a cutting length that is
%$\rho$ times (similarly, $(1+\epsilon)$ times, for any value $\epsilon>0$)
%the optimal cutting length. Please refer to~\cite{CLRS01} for preliminaries
%on approximation algorithms.}
is the best that one can achieve~\cite{ab}.
They also gave an approximation scheme
with pseudo-polynomial running time~\cite{ab}.
%, that is, the running time of their algorithm is polynomial only if the input data is encoded in unary.

After the indication of Bhadury and Chandrasekaran~\cite{ab} to
the hardness of the problem, many people have given polynomial time approximation algorithms.
Dumitrescu proposed an $O(\log n)$-approximation algorithm with
$O(mn+n\log n)$ running time~\cite{ae,ad}. Then, Daescu and
Luo~\cite{af} gave the first constant factor approximation
algorithm with ratio $2.5+||Q||/||P||$,
where $||P||$ and $||Q||$ are the perimeters of $P$ and the minimum area bounding rectangle of $P$ respectively.
%Note that the value of $||Q||/||P||$ can be arbitrarily large in worst case.
Their algorithm has a running time of
$O(n^{3}+(n+m)\log{(n+m)})$. The best known constant factor
approximation algorithm is due to Tan~\cite{ac} with an approximation ratio of
$7.9$ and running time of $O(n^{3}+m)$. In the same
paper~\cite{ac}, the author also proposed an $O(\log n)$-approximation
algorithm with improved running time of $O(n+m)$.
%They also gave an approximation scheme [2] with pseudo-polynomial running time.
%, that is, the running time of their algorithm is polynomial only if the input data is encoded in unary.
As the best known result so far, very recently, Bereg, Daescu and
Jiang~\cite{ah} gave a polynomial time approximation scheme (PTAS)
%(i.e., the approximation ratio is $(1+\epsilon)$ for any $\epsilon>0$)
for this problem with running time $O(m+\frac{n^{6}}{\epsilon^{12}})$.
Recently, Ahmed et.al.~\cite{AIH} have given similar constant factor and $O(\log n)$-factor
approximation algorithms where $Q$ is a circle. 
As observed in~\cite{AIH}, algorithms for $Q$ being a convex polygon are not
easily transferred for $Q$ being a circle, as the running time of the
formers depend upon the number of edges of $Q$.

%This problem has also been studied when $Q$ is a non-convex polygon~\cite{aa}
%and $P$ is a non-convex polygon~\cite{ai,af,ac} and/or the cuts are ray cuts~\cite{ai,af,ac}.

For ray cuts, Demaine, Demaine and Kaplan~\cite{ag} gave a linear time algorithm to decide
whether a given polygon $P$ is \emph{ray-cuttable} or not.
For optimally cutting $P$ out of $Q$ by ray cuts, 
if $Q$ is convex and $P$ is non-convex but ray-cuttable, 
then Daescu and Luo~\cite{af} gave an almost linear time
$O(\log^2n)$-approximation algorithm.
If $P$ is convex, then they gave a linear time $18$-approximation algorithm.
Tan~\cite{ac} improved the approximation ratio for both cases
as $O(\log n)$ and $6$, respectively, but with much higher running time of $O(n^3+m)$.
See Table~\ref{fi:comparison} for a summary of these results.

\begin{table*}[htbp]
\begin{center}
{\scriptsize
\begin{tabular}{|c|c|c|c|c|c|c|}
\hline
Dim. & Cut Type & $Q$ & $P$ & Approx. Ratio & Running Time & Reference \\
%\hline
%\multicolumn{6}{c}{2D} \\
\hline
\multirow{11}{*}{2D} 
& \multirow{5}{*}{Line} & Convex & Convex & $O(\log n)$ & $O(mn+n\log n)$  & \cite{ae,ad} \\
\cline{3-7}
& & Convex & Convex & $2.5+||Q||/||P||$ & $O(n^{3}+(n+m)\log{(n+m)})$ & \cite{af} \\
\cline{3-7}
& & Convex & Convex & 7.9 & $O(n^{3}+m)$ & \cite{ac} \\
\cline{3-7}
& & Convex & Convex & $(1+\epsilon)$ & $O(m+\frac{n^{6}}{\epsilon^{12}})$ & \cite{ah} \\
\cline{3-7}
& & Circle & Convex & $O(\log n)$ & $O(n)$ & \cite{AIH} \\
\cline{3-7}
& & Circle & Convex & $6.48$ & $O(n^3)$ & \cite{AIH} \\
\cline{2-7}
& \multirow{5}{*}{Ray} & - & Non-convex & Ray-cuttable? & $O(n)$ & \cite{ag} \\
\cline{3-7}
& & Convex & Convex & $18$ & $O(n)$ & \cite{af} \\
\cline{3-7}
& & Convex & Non-convex & $O(\log^2n)$ & $O(n)$ & \cite{af} \\
\cline{3-7}
& & Convex & Convex & $6$ & $O(n^3+m)$ & \cite{ac} \\
\cline{3-7}
& & Convex & Non-convex & $O(\log n)$ & $O(n^3+m)$ & \cite{ac} \\
\hline
%\multicolumn{6}{c}{{\bf 3D}}\\
%\hline
\multirow{2}{*}{{\bf 3D}} & Hot-wire & - & Non-convex & Cuttable? & $O(n^5)$ & \cite{JK03}\\
\cline{2-7}
& {\bf Guillotine} & {\bf Sphere} & {\bf Convex} & {\boldmath$O(\log^2 n)$} & {\boldmath$O(n^3)$} & {\bf This paper} \\
\hline
%\caption{Comparison of the results.}
%\label{fi:comparison}
\end{tabular}
}\end{center}
\caption{Comparison of the results.}
\label{fi:comparison}
\end{table*}

\paragraph{Our results}
The generalization of this problem in 3D is very little known.
To the best of our knowledge, the only result is to decide whether a 
polyhedral object can be cut out form a larger block using continuous hot wire cuts~\cite{JK03}.
In this paper we attempt to generalize the problem in 3D.
We consider the problem of cutting a convex polyhedron $P$ which is 
fixed inside a sphere $Q$ by using only guillotine cuts with minimum total cutting cost.
A \emph{guillotine cut}, or simply a \emph{cut}, is a plane that does not pass through $P$ 
and partitions $Q$ into two smaller convex pieces.
After a cut is applied, $Q$ is updated to the piece that contains $P$.
The \emph{cutting cost} of a guillotine cut is the area of the newly created face of $Q$.
We give an $O(n^3)$-time algorithm that cuts $P$ out of $Q$ by using only guillotine cuts and
has cutting cost no more than $O(\log^2 n)$ times the optimal cutting cost.
Also see Table~\ref{fi:comparison}.

The rest of the paper is organized as follows.
We give some preliminaries in Section~\ref{se:pre}, 
then Section~\ref{se:algo} gives the algorithms, and finally Section~\ref{se:con} concludes the paper 
with some future work.

\section{Preliminaries}
\label{se:pre}
%Before we start describing these two phases, we need some more preliminaries.
A cut is a \emph{vertex/edge/face cut} if it is tangent to $P$ at a single vertex/a single edge/a face respectively.
%\footnote{``tangent to a vertex'' means not tangent to an edge or a face and ``tangent to an edge'' means not tangent to a face.}
We call $P$ to be \emph{cornered} (within $Q$) if it does not contain the center $o$ of $Q$, otherwise it is called \emph{centered}.
For cornered $P$, the \emph{D-separation} of $P$ is the minimum-cost (single) cut that separates $P$ from $o$.

We represent an \emph{orthogonal projection} of a convex polyhedron $P$ by the corresponding 
\emph{projection direction} coming towards the origin from a \emph{view point} at infinity. 
A face $f$ of $P$ is visible in an orthogonal projection if the view point
lie in the half space that is defined by the supporting plane of $f$ and does not contain $P$.
An orthogonal projection of $P$ is called  \emph{non-degenerate}
%w.r.t some visible faces of $P$ if none of those faces is parallel to the projection direction.
if the projection direction is not parallel to any face of $P$.
An orthogonal projection of $P$ is a convex polygon.
If the projection is non-degenerate, then each edge of the projected convex polygon corresponds to an edge of $P$. 
%On the other hand, if the projection is degenerate, then the faces of the polyhedron
%that are parallel to the projection direction becomes an edge in the projected convex polygon.  

\section{The algorithm} 
\label{se:algo}
%The overall idea of our algorithm is as follows.
Let $C^*$ be the optimal cutting cost.
We shall have two phases in our algorithm: \emph{box cutting phase} and \emph{carving phase}.
In the box cutting phase, we shall cut a minimum volume rectangular box $B$ containing $P$ out of $Q$
with cutting cost no more than a constant factor of $C^*$.
Then in the carving phase we shall cut $P$ out of $B$ with cutting cost bounded by $O(\log ^2n)$ times of $C^*$.
%In the latter phase we shall heavily use orthogonal projection of $P$.

%In the next two subsections we describe box cutting phase and carving phase of our algorithm.

\subsection{Box cutting phase} 
We first deal with cornered $P$.
If $P$ is cornered, we shall first apply the D-separation to $Q$.
The following lemma gives a characterization of the D-separation, which will help finding it quickly.
%and based on this characterization we shall then find it quickly.

\begin{lemma}
\label{le:closest_x}
For cornered $P$, let $x$ be the closest point of $P$ from $o$.
Then the D-separation of $P$ is the plane that is perpendicular to the line segment $\overline{ox}$ at $x$.
\end{lemma}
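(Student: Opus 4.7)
The plan is to reformulate the D-separation problem as an optimization over the distance from $o$ to the cutting plane, then use the closest-point property of $x$ in the convex body $P$ to show this distance is maximized exactly by the candidate plane.

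First I would observe that the cutting cost of any plane separating $o$ from $P$ depends only on the distance $d$ from $o$ to the plane: the newly created face is a disk obtained by intersecting the plane with $Q$, so its area equals $\pi(R^{2}-d^{2})$, where $R$ is the radius of $Q$. This quantity is strictly decreasing in $d$, so minimizing cost is equivalent to maximizing the distance from $o$ to the separating plane.

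Next I would establish the upper bound on $d$. Any plane that separates $o$ from $P$ has $x$ on the far side from $o$, because $x\in P$. Hence the segment $\overline{ox}$ must cross the plane, which immediately yields $d\le|ox|$. So no separating plane can sit farther than $|ox|$ from $o$.

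Then I would show that the plane $\Pi$ perpendicular to $\overline{ox}$ at $x$ actually realizes this bound, i.e.\ that $\Pi$ separates $o$ from $P$. Letting $\vec n=(x-o)/|ox|$, the half-space on the $P$-side of $\Pi$ is $\{p:\vec n\cdot(p-o)\ge|ox|\}$. Suppose some $p\in P$ violates this inequality; by convexity the whole segment from $x$ to $p$ lies in $P$. Differentiating $t\mapsto|o-(x+t(p-x))|^{2}$ at $t=0$ gives $2(x-o)\cdot(p-x)=2\bigl(\vec n\cdot(p-o)-|ox|\bigr)|ox|<0$, so a point on the segment arbitrarily close to $x$ would be strictly closer to $o$ than $x$, contradicting the choice of $x$. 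Thus $P$ lies entirely in the claimed half-space, $\Pi$ is a valid separating plane at distance $|ox|$, and by the first two observations it is optimal.

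The only non-routine step is the convexity argument in the last paragraph, but it is really just the standard supporting-hyperplane characterization of the nearest point in a convex body, so I do not expect serious difficulty; the rest is a direct computation of disk area and a one-line observation that $\overline{ox}$ meets every separating plane.
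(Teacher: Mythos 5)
Your proof is correct and follows essentially the same route as the paper's: reduce minimizing the cut cost to maximizing the distance from $o$ to the separating plane, note that any separating plane must meet $\overline{ox}$ and hence lies within distance $|ox|$, and use the closest-point property of $x$ in the convex body $P$ to show the perpendicular plane at $x$ is indeed a supporting/separating plane. If anything, your version is slightly more careful than the paper's, since you make explicit both the $\pi(R^{2}-d^{2})$ cost formula (which the paper leaves implicit in its ``farthest tangent'' characterization) and the convexity argument showing that a point of $P$ on the wrong side of the plane would yield a point of $P$ strictly closer to $o$ than $x$.
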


\begin{proof}
A D-separation is a tangent to $P$ that separates $P$ from $o$ and is farthest form $o$.
Let $\pi$ be the plane that is perpendicular to the line segment $\overline{ox}$ at $x$.
To prove that $\pi$ is the D-separation, we first prove that $\pi$ is tangent to $P$.
Suppose not. 
Then there exists some portion of $P$ in the neighborhood of $x$ that lies in the half space of $\pi$ containing $o$.
Then there must be a point $y$ in that portion that is closer to $o$ than $x$,
%Let $y$ be a point in that portion.
%Then $y$ is closer to $o$ than $x$, 
which is a contradiction that $x$ is closest to $o$.

We next prove that $\pi$ is the farthest tangent of $P$ from $o$ that separates $x$ from $o$.
%$\pi$ must separate $x$ from $o$. So, it must intersect $\overline{ox}$.
To separate $x$ from $o$, $\pi$ must intersect $\overline{ox}$.
Now, any other plane that intersects $\overline{ox}$ and is not perpendicular to $\overline{ox}$ at $x$
is closer to $o$ than $\pi$.
Therefore, $\pi$ is the farthest.
\qed
\end{proof}

Observe that since $P$ is convex, the  closest point $x$ of $P$ from $o$ is unique,
and therefore, the D-separation of $P$ is also unique.
However, $x$ can be a vertex, or a point of an edge or a face.

\begin{lem}
The D-separation can be found in $O(n)$ time.
\end{lem}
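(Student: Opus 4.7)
The plan is to reduce the problem, via Lemma~\ref{le:closest_x}, to computing the point $x \in P$ closest to $o$, since once $x$ is in hand the perpendicular plane to $\overline{ox}$ at $x$ is obtained in $O(1)$ time.

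First I would observe that because $P$ is a convex polyhedron with $n$ vertices, it has $O(n)$ edges and $O(n)$ faces (by Euler's formula). The closest point $x$ must lie in the relative interior of exactly one feature of $P$: either a face, an edge, or a vertex. So the algorithm enumerates all such features and, for each one, computes in $O(1)$ time the closest point of that feature to $o$ together with its distance, then returns the overall minimum.

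Concretely: for each face $f$, drop a perpendicular from $o$ to the supporting plane of $f$; if the foot lies inside the (closed) polygon $f$ (checked in time proportional to the number of edges of $f$), record it as a candidate. For each edge $e$, project $o$ orthogonally onto the line through $e$; if the projection lies on $e$, record it. For each vertex $v$, record $v$ itself. Summing the per-face work over all faces is $O(n)$ by the standard bound on the total face complexity of a convex polyhedron, and the per-edge and per-vertex work is plainly $O(n)$. The minimum-distance candidate is $x$, and constructing the D-separating plane perpendicular to $\overline{ox}$ at $x$ takes constant time thereafter.

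The only subtlety, and the main thing to be careful about, is arguing that this exhaustive enumeration indeed produces the true closest point: convexity of $P$ guarantees that the closest point is unique and lies on $\partial P$ (since $o \notin P$), and every point of $\partial P$ belongs to the relative interior of some face, edge, or vertex, so at least one of the candidates coincides with $x$. Combining this with the $O(n)$ total work yields the lemma.
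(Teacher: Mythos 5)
Your proposal is correct and follows essentially the same approach as the paper: both reduce the task via Lemma~\ref{le:closest_x} to finding the closest point $x$ of $P$ from $o$, enumerate vertices, edges, and faces, compute the per-feature closest point by perpendicular projection in time proportional to the feature's complexity, and take the overall minimum for a total of $O(n)$ work.
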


\begin{proof}
By Lemma~\ref{le:closest_x}, we need to find the closest point $x$ of $P$ from $o$.
We first find the closest vertex $v$ from $o$ in $O(n)$ time.
Then for each edge $e$, we find the closest point $o_e$ of $e$ from $o$ as follows:
Let $l_e$ be the line passing through $e$.
Draw a line segment $\overline{oo'}$ perpendicular to $l_e$.
If $o'$ is a point of $e$, then $o_e$ is $o'$,
otherwise $o_e$ is the end point of $e$ that is closer to $o'$.
Finding $o_e$ can be done in constant time. 
For all edges of $P$, it takes $O(n)$ time.
Similarly, for each face $f$, we find the closest point $o_f$ of $f$ from $o$ as follows:
Let $\pi_f$ be the supporting plane of $f$.
Draw a line segment $\overline{oo'}$ perpendicular to $\pi_f$.
If $o'$ is a point of $f$, then $o_f$ is $o'$,
otherwise $o_f$ is among the closest point of the edges of $f$ or among the vertices of $f$.
Finding $o_f$ can be found in $O(d_f)$ time, where $d_f$ is the number of edges of $f$.
For all faces of $P$, it takes $\sum_fO(d_f)=O(n)$ time.
Finally, $x$ is the closest among $v$, all $o_e$'s and $o_f$'s.
\qed
\end{proof}

\iffalse*****************************************

By Lemma~\ref{le:closest_x} we need to find the plane that is perpendicular to the line segment $\overline{ox}$
at $x$, where $x$ is the closest point of $P$ from $o$.
To check whether $x$ is a vertex of $P$, 
among all the vertices of $P$ let $v$ be closest to $o$.
%for each vertex $v$ we draw a plane $\pi_v$
Let $\pi_v$ be the plane 
perpendicular to $\overline{ov}$ at $v$. If $\pi_v$ is tangent to $P$, 
then $\pi_v$ is the D-separation.
Checking $\pi_v$ to be a tangent of $P$ can be easily done in $O(d_v)\in O(n)$ time, 
where $d_v$ is the degree of $v$.
%, by finding whther every other vertex of $P$ lies in the half space of $\pi_v$ that does not contain $o$.
%Over all $v$, the time required is  $O(n)$.

To check whether $x$ is a point of an edge $e$ (similarly, of a face $f$) of $P$,
for each edge $e$ (each face $f$) we draw the line segment $\overline{oo'}$
perpendicular to the line $l_e$ passing through $e$ (perpendicular to the supporting plane $\pi_f$ of $f$).
If $o'$ is a point of $l_e$ ($\pi_f$), then $x$ is $o'$
and the D-separation is the plane perpendicular to $\overline{ox}$ at $x$.
Checking whether $o'$ to be a point of $f$ can be easily done in $O(d_f)$ time, 
where $d_f$ is the number of edges of $f$.
Over all faces of $P$, this checking requires a total time of $\sum_fO(d_f)=O(n)$.
\qed
\end{proof}

*********************************************************\fi

For cornered $P$, after the D-separation is applied, $Q$ is a spherical segment
and let $r$ be the radius of the base circle of that segment.
The following lemma relates for cornered $P$ the cost of D-separation and $C^*$.

\begin{lem}
\label{le:cornered_lb}
For cornered $P$, cost of the D-separation, which is $\pi r^2$, is at most $C^*$.
\end{lem}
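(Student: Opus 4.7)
The plan is to prove $\pi r^{2}\le C^{*}$ by a cumulative accounting: any sequence of guillotine cuts that isolates $P$ from $Q$ must collectively pay at least $\pi r^{2}$ in order to remove the center $o$.

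I would first record the key geometric fact that any plane $\sigma$ separating $o$ from $x$ lies at distance at most $|ox|$ from $o$. Consequently, if $R$ is the radius of $Q$, the disk $\sigma\cap Q$ has radius at least $\sqrt{R^{2}-|ox|^{2}}=r$ and area at least $\pi r^{2}$; so the D-separation is the minimum-area planar separator of $\{o\}$ from $P$ inside $Q$. Moreover, in any valid cutting sequence at least one cut must separate $o$ from $P$, since the final polyhedron contains $P$ but, as $P$ is cornered, cannot contain $o$.

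To lower-bound the \emph{cumulative} cost, I would set up a telescoping argument. Let $g(Q')$ denote the minimum total cost of a sequence of guillotine cuts that removes $o$ from a convex $Q'\subseteq Q$ still containing $P$, with $g(Q')=0$ when $o\notin Q'$. Prepending any single cut $\sigma_{i+1}$ to an optimum sequence for $Q_{i+1}$ gives $c_{i+1}+g(Q_{i+1})\ge g(Q_{i})$; telescoping over the optimal cutting sequence for $P$ out of $Q$ yields $C^{*}\ge g(Q)-g(Q_{\text{final}})=g(Q)$. Thus it suffices to show $g(Q)\ge\pi r^{2}$, i.e., no multi-cut strategy separates $o$ from $P$ inside the sphere at total cost below $\pi r^{2}$.

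The main technical obstacle is exactly this reduced inequality. My natural plan is to project the newly created faces of any such sequence orthogonally onto $\pi^{*}$: orthogonal projection does not increase area, so if the union of the projections covers the disk $\pi^{*}\cap Q$ of area $\pi r^{2}$, the total face area---and hence the cost---is at least $\pi r^{2}$. Establishing the coverage in full generality is delicate, and may need to be supplemented by charging any uncovered portion of the disk back to the earlier cuts whose removed spherical caps sliced into that portion, using the convexity of every intermediate $Q_{i}$ and the tangency of $\pi^{*}$ to $P$ at $x$. This accounting step---matching the planar 2D area of $\pi^{*}\cap Q$ against the planar faces of a 3D cutting sequence---is where I expect the main effort of the proof to lie.
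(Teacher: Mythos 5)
Your reduction is sound as far as it goes: it is true that some cut in any valid sequence must separate $o$ from $P$, that the D-separation is the cheapest \emph{single} plane doing so (area at least $\pi r^2$ because a separating plane lies within distance $|ox|$ of $o$), and your telescoping with $g$ correctly shows $C^*\ge g(Q)$. But the entire content of the lemma is the inequality $g(Q)\ge \pi r^2$, i.e., that a \emph{multi-cut} strategy cannot separate $o$ from $P$ more cheaply, and for this you offer only a plan, explicitly deferring ``the main effort of the proof.'' That is a genuine gap, and moreover the primary plan fails: the projections of the created faces onto $\pi^*$ need \emph{not} cover the disk $D=\pi^*\cap Q$. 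Take $o$ at the origin, $x=(0,0,\delta)$ with $\delta$ small and $P$ a tiny polyhedron just above $x$, so $\pi^*=\{z=\delta\}$ and $r=\sqrt{R^2-\delta^2}$. Let the first cut be the vertical plane $\{u=R/2\}$ (first coordinate), which misses $P$ and does not separate $o$ from $P$. Its face projects onto $\pi^*$ as a segment of measure zero, every later face lies in $\{u\le R/2\}$, and every fiber through a point of $D$ with first coordinate exceeding $R/2$ lies wholly inside the removed cap; so a lens of $D$ of positive area is never covered. Hence the ``charging of uncovered portions back to earlier cuts'' is not an optional refinement but the actual theorem to be proved, and you have not supplied it.

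For comparison, the paper avoids any covering argument. It takes the prefix $C_1,\dots,C_k$ of the optimal sequence up to the first cut $C_k$ that separates $o$ from $P$, and repeatedly merges the first two cuts: replacing $C_1,C_2$ by a single cut $C_2'$ in the plane of $C_2$ applied to the pre-$C_1$ body costs at most $\mathrm{cost}(C_1)+\mathrm{cost}(C_2)$, because the extra area of $C_2'$ beyond $C_2$ is a planar cross-section of the piece removed by $C_1$, and that piece is at most a half-ball (as $C_1$ does not separate $o$ from $P$), so the extra area is bounded by the area of $C_1$'s face. Iterating $k-1$ times leaves a single separating cut $C_k'$, whose cost is at least that of the D-separation, namely $\pi r^2$. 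If you want to salvage your projection idea you would essentially have to reprove this pairwise-replacement bound inside your charging step, so you should adopt the merging argument directly.
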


\begin{proof}
Consider an optimal cutting sequence ${\cal C}$ with cutting cost $C^*$. 
${\cal C}$ must separate $P$ from $o$. 
However, it may do that by using a single cut  or by using more than one cut. 
If it uses a single cut, then it is in fact doing the the D-separation, 
since the D-separation is the minimum cost single cut that can separate $P$ from $o$.
Therefore, $C^*\ge \pi r^2$.

If ${\cal C}$ uses more than one cut, then let ${\cal C}=C_1,C_2,\ldots,C_k,\ldots$ 
with $C_k$ being the first cut that separates $o$ from $P$. 
Observe that $C_k$ can not be the very first cut of ${\cal C}$, 
since otherwise it is doing a D-separation and we are in the previous case.
Replace ${\cal C}=C_1,C_2,\ldots,C_k$ by a single cut ${C_k}'$ whose plane is the same as that of $C_k$.
%Observe that $C_k'$ 
We will show that cost of ${C_k}'$ is smaller than the total cost of $C_1,C_2,\ldots,C_{k}$.

Consider the first two cuts $C_1$ and $C_2$.
Replace $C_1$ and $C_2$ by a single cut ${C_2}'$ whose plane is the same as that of $C_2$.
Since $C_1$ does not separate $P$ from $o$, the portion of $Q$ that is created due to $C_1$
and that does not contain $P$ is no larger than a half sphere of $Q$.
It implies that the portion of ${C_2}'$ that is not present in $C_2$ is smaller than $C_1$
(also see Fig.~\ref{fi:replace}).
That means the cost of ${C_2}'$ is smaller than the total cost of $C_1$ and $C_2$.
Similarly, we can show that replacing ${C_2}'$ and $C_3$ with a 
single cut ${C_3}'$ in the plane of $C_3$ has smaller cost than the total cost of ${C_2}'$ and $C_3$.
Repeating this for $k-1$ times would show that ${C_k}'$ has smaller cost than the total cost of $C_1,C_2,\ldots,C_{k}$.
Therefore, using more than one cut to separate $P$ from $o$ is even worse than using a single cut,
and we already proved that an optimal way to use a single cut is to use the D-separation.
Thus the lemma holds.
\qed
\end{proof}

\iffalse

%The proof depends upon the fact that the cuts in an optimal cutting sequence must be tangents to $P$.
%Overmars and Welzl~\cite{aa} proved this fact for 2D, whose 3D generalization also holds.
%The argument is that if $c$ is the first cut that is not a tangent of $P$, then the cost of $c$ and the subsequent cuts behaves, 
%while moving $c$ parallelly, as a concave function in the distance of $c$ from $P$.
%Therefore, the minimum cost is achieved when it touches $P$ or is infinitely away from $P$.

The above fact also implies that,
%With the above fact, the authors in~\cite{AIH} proved in 2D that to separate $P$ from $o$ 
%an optimal cutting sequence must use the D-separation.
%The 3D generalization of this fact also holds.
%The main idea is that, 
to separate $o$ from $P$ if a single cut is used that is not a D-separation,
then it must have cost more than the D-separation, since D-separation is the minimum such cut.
If more than one cut are used, then their total cost would be even higher.
\qed
\end{proof}

\fi

\begin{figure}
\begin{center}
\input{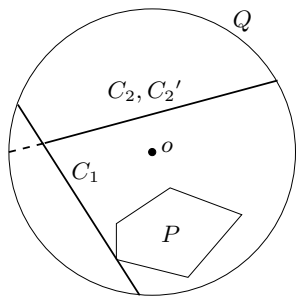}
\caption{2D view of $C_1,C_2$ and $C_2'$. 
Broken line represents the part of $C_2'$ that is not in $C_2$;
This portion is smaller than $C_1$.}
\label{fi:replace}
\end{center}
\end{figure}

We now deal with centered $P$. A lemma that is similar to the previous one and gives lower bound
for centered $P$ is the following.

\begin{lem}
\label{le:centered_lb}
For centered $P$, it holds that $C^*\ge \pi R^2$, where $R$ is the radius of $Q$.
\end{lem}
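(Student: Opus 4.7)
The plan is to account for the total cost by the area of the original sphere surface that must be removed. Fix an optimal cutting sequence $C_1,\dots,C_m$ with new faces $F_i$ of area $a_i$, cut planes $\pi_i$ at distance $d_i$ from $o$, and shrinking pieces $Q=Q_0\supseteq Q_1\supseteq\cdots\supseteq Q_m=P$; the last equality can be arranged by cutting along the face-planes of $P$. For each $i$ let $A_i\subseteq\partial Q$ be the portion of the original sphere surface present in $\partial Q_{i-1}$ but absent from $\partial Q_i$. The sets $\{A_i\}$ are pairwise disjoint and, since $P$ is polyhedral and $\partial Q\cap\partial P$ therefore has two-dimensional measure zero, $\sum_i|A_i|=4\pi R^2$. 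The target is the pointwise bound $a_i\ge\tfrac14|A_i|$; summing it gives $C^*=\sum_ia_i\ge\tfrac14\cdot 4\pi R^2=\pi R^2$.

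To prove $a_i\ge\tfrac14|A_i|$ I would use radial projection from $o$ onto $\pi_i$. Since $P$ is centered, $o\in P\subseteq Q_{i-1}$, so for each $q\in A_i$ the segment $\overline{oq}$ lies in the convex piece $Q_{i-1}$ and meets $\pi_i$ at a unique point $q^*\in \pi_i\cap Q_{i-1}=F_i$. The map $q\mapsto q^*$ is injective, and its Jacobian, in terms of $d_i$ and the angle $\theta(q)$ between $\overline{oq}$ and $\hat v_i$, equals $d_i^2/(R^2\cos^3\theta)$. Using this together with the constraint $|q^*|=d_i/\cos\theta\le R$ yields $a_i\ge(d_i/R)^2|A_i|$, which already beats $\tfrac14|A_i|$ whenever $d_i\ge R/2$. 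For shallow cuts with $d_i<R/2$ I would additionally use orthogonal projection of $A_i$ onto $\pi_i$, whose image has area $\int_{A_i}\cos\theta\,dA$; on a full spherical cap this integrand averages to $(R+d_i)/(2R)\ge\tfrac12$, recovering $a_i\ge|A_i|/2$ directly in that regime.

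The hard part will be patching these two projection arguments into a single uniform bound. The radial-projection estimate is tight only for cuts far from $o$, whereas the orthogonal-projection estimate gives the correct factor $1/2$ for a full spherical cap but might leave the image outside $F_i$ when $A_i$ is only a fragment of sphere sitting on an already-trimmed piece $Q_{i-1}$. Reconciling them---either by aggregating the Jacobian contributions of several consecutive cuts whose $A_i$'s together form a full cap, or by exploiting the convexity of $Q_{i-1}$ to show that the orthogonal projection of $A_i$ actually falls inside $F_i$---is the main technical obstacle I expect to face, after which summing the pointwise bound delivers the lemma.
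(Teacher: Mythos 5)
Your charging scheme---partition the original sphere surface into the pieces $A_i$ removed by successive cuts and prove the pointwise bound $a_i\ge\tfrac14|A_i|$---is appealing, and your computations (the disjointness of the $A_i$, the radial-projection Jacobian $d_i^2/(R^2\cos^3\theta)$, the injection of the radial image into $F_i$) are all correct. But the pointwise inequality you are aiming for is false, so the ``hard part'' you describe is not a patching problem but a dead end. Concretely: let $P$ be a sufficiently small polyhedron containing $o$, and first apply many cuts whose planes are tangent to a narrow cone with apex at $(0,0,-\rho)$, $\rho\ll R$, and small half-angle $\beta$, opening toward the north pole. None of these passes through $P$, and afterwards $Q_{i-1}$ is essentially the convex hull of a small neighborhood of $o$ and the spherical cap $S$ of angular radius about $\beta$ at the north pole. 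Now cut with $\pi_i:\ z=2\rho$. Then $A_i=S$ has area about $\pi R^2\beta^2$, while $F_i$ is the cross-section of the cone at height $2\rho$, a disk of radius $O(\rho\beta)$ and area $O(\rho^2\beta^2)$; hence $a_i/|A_i|=O(\rho^2/R^2)$, arbitrarily small. Note that here $A_i$ is a genuine full spherical cap and the radial projection of $A_i$ already covers a constant fraction of $F_i$, so no combination of your two projection estimates can restore a constant ratio: the obstruction is that $Q_{i-1}$ may have been whittled down to a narrow cone, forcing $F_i$ to be small no matter how large the surviving sphere patch beyond $\pi_i$ is. (The total cost in this example is still at least $\pi R^2$, because each cone-creating cut passes near $o$ and is individually expensive---but that is precisely the global information a per-cut charge against $|A_i|$ discards.) Your proposed fallback of aggregating consecutive cuts whose $A_i$'s form a full cap does not obviously help, since the example shows a full cap being removed at cost far below any constant fraction of its area.

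For comparison, the paper does not attempt a per-cut accounting at all: it argues directly (and itself quite informally) that since $P$ contains $o$, the cut surfaces must ``wrap around'' the center, costing at least $\tfrac12\pi R^2$ to reach it and another $\tfrac12\pi R^2$ to get back out. So your route is genuinely different, and if made to work it would be more rigorous than the paper's; but as proposed it rests on a false lemma, and any repair must somehow use the fact that cuts close to $o$ are expensive in aggregate, not merely that they remove sphere area.
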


\begin{proof}
For centered $P$, $Q$ remains a sphere.
Since $P$ contains the center $o$ of $Q$, any cutting sequence, starting from the boundary of $Q$,
must ``wrap'' $P$ and finally get out of $Q$ by a plane different from the starting plane.
That means the wrapping must enclose the center $o$.
In the best case when $P$ is simply a point that lies in the center $o$ of $P$, 
the cutting sequence must traverse at least $\frac{1}{2}\pi R^2$ area to reach $P$ 
and then to traverse another $\frac{1}{2}\pi R^2$ area to finish the cutting. 
In the worst case, when $P$ is almost the sphere $Q$, the sequence must traverse the whole area of $Q$,
which is $4\pi R^2$.
\qed
\end{proof}

We next find a minimum volume rectangular bounding box $B$ of $P$ in $O(n^3)$ time 
by the algorithm of O'Rourke~\cite{O85}.
Then we cut out this box from $Q$ by applying six cuts along the six faces of $B$.

\begin{lem}
\label{le:garbage1}
Cost of cutting $B$ out of $Q$ is at most $3C^*$ for cornered $P$ and at most $4C^*$ for centered $P$.
\end{lem}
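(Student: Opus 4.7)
The plan is to bound the total cost of the six box-cuts (plus the D-separation, in the cornered case) against $C^*$ by bounding each cut's new-face area via the spherical cross-section at the cutting plane and then summing. The key observation is that when we cut along the plane of a face $f_i$ of $B$, the newly created face lies in the intersection of that plane with the \emph{current} piece of $Q$, which is itself a subset of the original sphere, so the cut's area is at most $\pi(R^2-d_i^2)$, where $d_i$ is the distance from $o$ to the plane of $f_i$.

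For a cornered $P$, I would first apply the D-separation; by Lemma~\ref{le:cornered_lb} this cut costs $\pi r^2 \le C^*$. After the D-separation, the current piece of $Q$ is a spherical segment whose base disk has radius $r$, and $B$ lies inside this segment. I would then bound each of the six subsequent cuts along $B$'s faces by an area controlled by $\pi r^2$, exploiting the fact that any cross-section of the segment by a plane inside it is dominated (for cuts parallel to the base) or, in the perpendicular case, by a suitable fraction of $\pi r^2$ obtained by the circular-segment formula for the cap. Aggregating the D-separation cost and the six subsequent cut areas and applying $\pi r^2\le C^*$ should yield total box-cutting cost at most $3C^*$.

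For a centered $P$, no D-separation is applied and we cut directly along the six faces of $B$. To exploit $o\in B$, I would pair opposite faces of $B$: for each pair $j$ with face distances $a_j,b_j$ from $o$, the identity $a_j+b_j=\ell_j$ gives $a_j^2+b_j^2\ge \ell_j^2/2$, which tightens the paired cross-section sum $\pi(2R^2-a_j^2-b_j^2)$. Combining the three paired bounds with the diagonal constraint $\ell_1^2+\ell_2^2+\ell_3^2\le 4R^2$ (since the diagonal of $B$ is at most the diameter of $Q$) and with Lemma~\ref{le:centered_lb}'s $C^*\ge \pi R^2$ would give total cost at most $4C^*$ after summation.

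The main obstacle is obtaining the precise constants $3$ and $4$: a naive argument that uses the original sphere's full cross-section for every cut gives only a factor of about $6$, so the proof must interlock the per-cut charges via the paired-opposite-face structure (in the centered case) or via the spherical cap geometry left behind by the D-separation (in the cornered case). Making the cap-restriction inequality quantitative for non-axis-parallel cuts is the delicate step, as is verifying that the paired-face saving combined with the diameter constraint is enough to absorb the remaining slack against $\pi R^2$.
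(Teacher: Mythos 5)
There is a genuine gap, and it is located exactly where you flagged the difficulty. Your strategy bounds each of the six cuts by a cross-section of the \emph{original} sphere (or segment), $\pi(R^2-d_i^2)$, and then tries to recover the constants $3$ and $4$ by algebraic constraints on the $d_i$. In the centered case this cannot work: for a tiny box $B$ around the center, all $d_i\approx 0$ and each original-sphere cross-section is $\approx\pi R^2$, so your per-cut bounds sum to $\approx 6\pi R^2$ and there is no slack to be recovered. Concretely, your paired-face inequality runs the wrong way: you derive total cost $\le \pi\bigl(6R^2-\tfrac12\sum_j\ell_j^2\bigr)$, and to push this below $4\pi R^2$ you would need $\sum_j\ell_j^2\ge 4R^2$, but the diagonal constraint you invoke is $\sum_j\ell_j^2\le 4R^2$, which only shows your bound is \emph{at least} $4\pi R^2$ (and up to $6\pi R^2$ for small boxes). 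A similar small-box example defeats the cornered-case plan. The saving that actually yields $3$ and $4$ comes from the \emph{sequential} updating of $Q$: after the first near-central cut, the remaining piece is roughly a half-ball, and the opposite-face cut is then cheap because it slices a much smaller body, not the full sphere. Any bound computed against the static original sphere discards exactly this effect.

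The paper's proof captures the sequentiality with a different charging scheme: for each cut $c$, let $Q'$ be the discarded convex piece and $q'$ the part of $\partial Q'$ inherited from the surface of the current $Q$; since $Q'$ is convex and is bounded by the new planar face together with $q'$, the new face's area is at most $|q'|$ (orthogonal projection of $q'$ onto the cutting plane covers the face). The inherited pieces over the six cuts are pairwise disjoint subsets of the surface of $Q$ at the start of the box-cutting, so the total cost is at most $|Q|$, which is $\le 3\pi r^2\le 3C^*$ for the post-D-separation segment (cap area $\le 2\pi r^2$ plus base $\pi r^2$) and $=4\pi R^2\le 4C^*$ for the full sphere. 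To repair your write-up you would need to replace the cross-section bound by this surface-charging argument (or some other mechanism that accounts for the shrinking of $Q$ between cuts); the pairing and diagonal constraints should be dropped, as they do not contribute a usable inequality.
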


\begin{proof}
%Let $S$ be the surface of $Q$. % after applying the D-separation, if necessary.
Consider $Q$ before $B$ was cut out of it.
We denote the area of $Q$ by $|Q|$\footnote{In the subsequent text, we use $|\cdot|$ to denote the area of a 3D object.}.
For cornered $P$, since $Q$ is no bigger than a half sphere, it holds that $|Q| \le 3\pi r^2$,
which by Lemma~\ref{le:cornered_lb} becomes $|Q| \le 3C^*$. 
For centered $P$, since we do not apply D-separation, we have $|Q|=4\pi R^2$,
which by Lemma~\ref{le:centered_lb} becomes $|Q|\le 4C^*$.
While cutting along the faces of $B$, for each cut $c$ let $Q'$ be the portion of $Q$
that does not contain $P$. 
%We call $Q'$ the \emph{garbage} piece of $Q$.
Let $q'$ be the portion of the surface of $Q'$ that is ``inherited'' from $Q$, 
i.e., that was a part of the surface of $Q$.
One important observation is that the cost of $c$ is no more than the area of $q'$.
Moreover, over all six cuts, sum of these inherited surface area is $|Q|$.
Therefore, the lemma holds.
\qed
\end{proof}

Once the minimum area bounding box $B$ has been cut, a lower bound on $C^*$ can be 
given in terms of the area of $B$.

\begin{lem}
\label{le:lb}
$C^*\ge \frac{1}{6}|B|$, where $|B|$ is the area of $B$.
\end{lem}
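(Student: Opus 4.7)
The plan is to establish the chain $C^* \ge S(P) \ge \frac{1}{6}|B|$, where $S(P)$ denotes the surface area of $P$. For the first inequality, I note that $P$ is a polyhedron, so every face of $\partial P$ is flat; but the only way a flat region can appear on the boundary of a region carved from $Q$ by guillotine cuts is as part of a cut plane, since the curved spherical surface of $Q$ cannot host a flat face. Hence each face of $P$ lies in the plane of some cut, its area is at most the surviving area of that cut on the final boundary, and summing over all faces gives $C^* = \sum_i |c_i| \ge S(P)$.

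For the second inequality, let $\hat u_1, \hat u_2, \hat u_3$ be the axis directions of $B$, let $F_i$ be the face of $B$ perpendicular to $\hat u_i$, let $\pi_i$ be the orthogonal projection of $P$ onto the plane of $F_i$, and set $A_i = |\pi_i|$. Three ingredients combine. \emph{Convexity:} any line parallel to $\hat u_i$ meets $\partial P$ at most twice, so the portions of $\partial P$ with positive (respectively negative) $\hat u_i$-component of outward normal each project surjectively onto $\pi_i$ and therefore have area at least $A_i$; hence $S(P) \ge 2A_i$. \emph{Minimality of $B$:} rotating $B$ about its $\hat u_i$-axis preserves the $\hat u_i$-extent but can shrink $|F_i|$, so the fact that $B$ has minimum volume forces $F_i$ to be a minimum-area bounding rectangle of $\pi_i$ (otherwise a rotation aligning $B$ with a smaller bounding rectangle of $\pi_i$ would produce a strictly smaller box, a contradiction). \emph{Classical rectangle bound:} for any convex planar region $R$, the minimum-area bounding rectangle of $R$ has area at most $2|R|$; applied to $\pi_i$ this yields $|F_i| \le 2A_i$, that is $A_i \ge |F_i|/2$.

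Putting these together, $\max_i A_i \ge \frac{1}{2}\max_i |F_i| \ge \frac{1}{2}\cdot \frac{|F_1|+|F_2|+|F_3|}{3} = \frac{|B|}{12}$, and therefore $C^* \ge S(P) \ge 2\max_i A_i \ge \frac{1}{6}|B|$, as required. The main obstacle will be the minimality-of-$B$ step: deducing from ``$B$ has minimum volume'' the precise structural statement that each $F_i$ is a minimum-area bounding rectangle of the corresponding projection $\pi_i$, since only this unlocks the classical two-approximation of convex regions by rectangles. Once that structural fact is in hand, the rest is a short chain of elementary inequalities.
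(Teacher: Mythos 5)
Your proof is correct and follows essentially the same route as the paper's: both use the minimality of $B$ to argue that a face of $B$ is a minimum-area bounding rectangle of the corresponding projection of $P$, invoke the classical factor-$2$ bound for bounding rectangles of convex regions, and lower-bound $C^*$ by twice a projection area. Your version merely makes the step $C^*\ge 2|X|$ more explicit (via $C^*\ge S(P)$) and averages over the three axis directions where the paper simply takes the largest face $h$ with $|B|\le 6|h|$.
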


\begin{proof}
Let $h$ be a maximum area face of $B$.
Project $P$ orthogonally from the direction perpendicular to $h$.
$P$ projects to a convex polygon $X$.
In this projection, $h$ is the minimum area bounding rectangle of $X$,
since otherwise we could rotate the four faces of $B$ that are not perpendicular to $h$
and would get a bounding rectangle smaller than $h$, which in turn would give a bounding box smaller than $B$,
but that would be a contradiction that $B$ is the smallest bounding box.
It implies that the area of $X$ is at least $\frac{1}{2}|h|$.
Now, $C^*$ is at least twice the area of $X$, and $|B|\le 6|h|$. 
Therefore, $C^* \ge 2|X| \ge 2\cdot\frac{1}{2}|h|\ge \frac{1}{6}|B|$.
\qed
\end{proof}

\subsection{Carving phase} 

Let $T=B\backslash P$ be the portion of $B$ that we would achieve if $P$ were removed from $B$.
%is ``trapped'' between the boundaries of $P$ and $B$.
Then, $T$ is a polyhedral object. $T$ may be convex or non-convex and possibly disconnected.
The \emph{outer} surface of $T$ is the surface of $T$ that existed in $B$ when $P$ was not removed from $B$.
Our idea is to apply an edge cut through each edge of $P$, 
and we shall do that in two types of rounds: \emph{face rounds} and \emph{edge rounds}.
% with edge rounds applied inside of face rounds.
Face rounds will find polygonal chains that will partition the faces of $P$  into smaller 
sets and edge rounds will apply edge cuts through the edges of those polygonal chains.
%After an edge cut is applied we shall update $T$ as the portion that stick with $P$.
There will be $O(\log n)$ face rounds,
and within each face round there will be a number of edge rounds but their total cost will be $O(C^*\log n)$. 
Once we have applied edge cuts through all the edges of $P$, each face $f$ of $P$ will have a small ``cap''-like
portion of $T$ over it, which we shall cut at a cost of the area of $f$ to get $P$, 
giving a cost of $O(C^*)$ for all faces.

\subsubsection{Face rounds}
Let $F$ be a set of faces of $P$.
From now on, we use the term \emph{face set} to represent a set of faces of $P$.
%A set $F$ of faces of $P$ is called \emph{connected} if any two vertices in the faces of $F$ have a path.
% and no face other than those in $F$ is completely surrounded by the faces of $F$.
%Let $F$ be a \emph{connected face set} of $l$ faces of $P$.
%We call $F$ as called a \emph{face set} of $P$.
At the very first face round $i=0$, $F$ consists of all the faces of $P$.
We find a chain of edges ${P'}$ that will partition $F$ into two smaller
face sets $F_1$ and $F_2$ by the following lemma.
%(Precise definition of ``connected'' needs detail discussion, which we omit in this extended abstract.)
%But before that, we need some definitions.

\begin{lem}
\label{le:face_equal_partition}
Let $l$ be the number of faces in a face set $F$.
It is always possible to find in $O(l\log l)$ time a non-degenerate orthogonal projection of $P$ 
%which is non-degenerate w.r.t the faces of $F$ 
such that the two sets of visible and invisible faces of $F$ 
%, let them be $F_1$ and $F_2$ respectively,
contain at least $\lfloor\frac{l}{2}\rfloor$ faces each.
\end{lem}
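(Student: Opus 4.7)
The plan is to reduce the problem to a one-dimensional sweep on a circle of projection directions. For each face $f\in F$, let $n_f$ denote its outward unit normal. Since $P$ is convex, $f$ is visible from projection direction $d$ iff $d\cdot n_f>0$, invisible iff $d\cdot n_f<0$, and the projection is degenerate at $f$ iff $d\cdot n_f=0$. So the task is equivalent to finding a unit vector $d$, not parallel to any face of $P$, such that $|\{f\in F:d\cdot n_f>0\}|\in\{\lfloor l/2\rfloor,\lceil l/2\rceil\}$.

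The approach is to confine the search to a generic plane $\Pi$ through the origin. Choose $\Pi$ so that no $n_f$ is perpendicular to $\Pi$ (a mild general-position condition), pick orthonormal vectors $u,v$ spanning $\Pi$, and set $d(\theta)=\cos\theta\cdot u+\sin\theta\cdot v$ for $\theta\in[0,2\pi)$. For each $f\in F$ we have $d(\theta)\cdot n_f=(u\cdot n_f)\cos\theta+(v\cdot n_f)\sin\theta$, a non-trivial sinusoid, so $\{\theta:d(\theta)\cdot n_f>0\}$ is an open arc of length exactly $\pi$ whose two endpoints are computable in $O(1)$ time. Sorting the resulting $2l$ arc endpoints in $O(l\log l)$ time and sweeping around the circle lets me maintain the coverage count $v(\theta)=|\{f\in F:d(\theta)\cdot n_f>0\}|$. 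Because $d(\theta+\pi)=-d(\theta)$, we have $v(\theta+\pi)=l-v(\theta)$, so along the sweep the coverage starts at some $v_0$ and must eventually reach $l-v_0$. Under general position (no two endpoints coincide) $v(\cdot)$ changes by exactly $\pm1$ at each event, and a discrete intermediate-value argument then forces $v$ to attain every integer between $v_0$ and $l-v_0$, in particular some $k\in\{\lfloor l/2\rfloor,\lceil l/2\rceil\}$. The sweep reports the first such event, and I pick $d$ as the midpoint of the open arc on which this coverage is constant.

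Two remaining points deserve care. First, non-degeneracy must hold with respect to \emph{all} faces of $P$, not just those in $F$. Since the coverage is constant on the open interval found by the sweep, any sufficiently small perturbation of $\theta$ within this interval preserves the partition of $F$; the finitely many ``bad'' angles at which $d(\theta)$ is parallel to a face of $P\setminus F$ can therefore be avoided, for instance by choosing $\Pi$ itself in general position beforehand, without breaking the $O(l\log l)$ budget. Second, the generic choice of $\Pi$ ensures no two arc endpoints coincide; this is what rules out coverage jumps of $\pm2$ (which could in principle skip over $\lfloor l/2\rfloor$). I expect the main obstacle to be precisely this correctness step: arguing that the $\pm\pi$ symmetry of $v$ together with the unit-step behaviour really does force $v$ to hit one of $\lfloor l/2\rfloor,\lceil l/2\rceil$, as opposed to merely oscillating near it. Once this is in place, the algorithmic content is a routine one-dimensional sort-and-sweep.
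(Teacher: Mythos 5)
Your proof is correct and is essentially the paper's argument in dual form: the paper rotates a great circle of the normal-direction sphere about two generic poles and counts normal points per hemisphere, which is exactly your sweep of $d(\theta)$ over a generic plane $\Pi$ with the $\pi$-antipodal symmetry and discrete intermediate-value step. Both yield the same $O(l\log l)$ sort-and-sweep, with your write-up merely being more explicit about the unit-step and degeneracy details.
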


\begin{proof}
For this proof we shall move on to the surface of an origin-centered sphere $s$.
For each face $f\in F$, its  \emph{normal point} is the intersection point of $s$ 
and the outward normal vector of $f$ when the vector is translated to the origin.
%outward normal is uniquely represented by a point of $s$, which we call the \emph{normal point} of $f$.
Each point of $s$ also represents an orthogonal projection direction of $P$.
So, a non-degenerate orthogonal projection of $P$ can be represented 
by a great circle of $s$ that does not pass through the normal points of the faces of $P$.
%Let $N$ be the set of normal points of the faces of $F$.
We need one such great circle satisfying an additional criterion that its two hemispheres contain 
at least $\lfloor\frac{l}{2}\rfloor$ normal points each.
There exists infinitely many such great circles and one of them can be found in $O(l\log l)$ time
as follows. 
Take as \emph{poles} any two antipodal points that are not normal points of the faces of $P$.
Take a great circle $g$ through these two poles and rotate it around these poles until 
the number of normal points in its two hemisphere differ by at most one.
If it happens that some normal points fall on $g$ when we stop,
then slightly change the poles as well as $g$ so that the normal points on $g$ are distributed into two hemispheres as necessary.
For running time, all we need to do is to sort the normal points according to their angular distance 
with the plane of initial position of $g$ at the origin.
The resulting projection is the one from the perpendicular direction of the plane of final position of $g$.
\qed
\end{proof}

The projection direction achieved by the above lemma is called the \emph{zone direction} of $F$.
$P'$ is the chain of edges in the boundary of the above projection
whose corresponding edges in $P$ have both adjacent faces (one is visible and another is invisible) in $F$.
We call $P'$ a \emph{separating chain} of $F$.
$F_1$ and $F_2$ are the two sets of faces separated by $P'$.
In the next face round $i+1$, we shall apply Lemma~\ref{le:face_equal_partition} for each of $F_1$ and $F_2$
recursively and thus get two separating chains and four  face sets.
We shall repeat the same procedure for each of these four face sets.
We shall continue like this until each face set has only one face.
Clearly, we need $O(\log n)$ face rounds.

\subsubsection{Edge rounds}
%We shall apply edge cuts through the edges of ${P'}$ by the edge rounds as described in the next paragraph.
Let ${P'}=e_1,e_2,\ldots,e_k$ be the separating chain of a particular face round
with its two ends, which are two vertices of $e_1$ and $e_k$, touching the outer surface of $T$.
Observe that for the very first face round $i=0$, $P'$ is a cycle and the two ends are the same.
%For any sunsequent edge round, $P'$ is a open polygonal chain.
We shall apply edge cuts through the edges of $P'$ such that all of them are parallel to a particular direction.
Such a direction can be the corresponding zone direction.
We shall call this set of $k$ edge cuts a \emph{zone} of cuts and their direction of cuts the \emph{zone cut direction}.
We shall apply these cuts in $O(\log k)$ edge rounds.
%At each edge round $j$, $0\le j\le \log k$, we shall apply $2^j$ cuts.
At the very first edge round $j=0$, we apply an edge cut through $e_{k/2}$ in the zone cut direction.
%If we are in the very first edge 
This cut will partition the edges of $P'$ into two sub chains of size at most $\lfloor\frac{k}{2}\rfloor$. 
In the next edge round $j+1$, we apply two edge cuts through the two middle edges of these two sub chains,
which will result into four sub chains.
Then in the next round we apply four similar cuts to the four sub chains.
We continue like this until each sub chain has only one edge.
Clearly, we need $O(\log k)$ edge rounds for ${P'}$. 

\begin{lem}
After all the face rounds and their corresponding edge rounds are completed, all edges of $P$ get an edge cut.
\end{lem}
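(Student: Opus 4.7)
The plan is to fix an arbitrary edge $e$ of $P$ with adjacent faces $f_1$ and $f_2$, and show that $e$ is contained in exactly one separating chain $P'$ of some face round, so that it receives an edge cut in the associated edge rounds. The argument proceeds by following $e$ down the recursion tree of face rounds.

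First, I would observe that at the initial face round ($i=0$) the face set $F_0$ consists of all faces of $P$, so in particular $f_1, f_2 \in F_0$. As the recursion proceeds, each face set is split by Lemma~\ref{le:face_equal_partition} into a visible sub-set and an invisible sub-set, and termination occurs when every face set is a singleton. Since $f_1 \neq f_2$, there is a well-defined first face round in the recursion at which $f_1$ and $f_2$ lie in the same face set $F$ but are assigned to different sub-sets $F_1$ (visible) and $F_2$ (invisible) by the chosen zone direction.

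Next, I would argue that at this face round the edge $e$ belongs to the separating chain $P'$. Since the projection supplied by Lemma~\ref{le:face_equal_partition} is non-degenerate and $P$ is convex, the boundary of the projected convex polygon is precisely the silhouette of $P$, whose edges are exactly those edges of $P$ whose two adjacent faces lie on opposite sides with respect to visibility. Because $f_1$ is visible and $f_2$ is invisible at this round, $e$ lies on the silhouette; and because both $f_1, f_2 \in F$, the edge $e$ satisfies the additional membership condition used in the definition of the separating chain. Hence $e \in P'$, and in the edge rounds belonging to this face round the algorithm applies an edge cut through $e$ in the zone cut direction.

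The main obstacle is really the bookkeeping that guarantees such a separating round exists for every edge, rather than any geometric difficulty: I need to rule out the possibility that $e$ escapes the process because one of its adjacent faces ``leaves'' the current face set before they are split apart. This is handled by observing that the recursion only removes a face from the current set when that face is moved into a sub-set together with or separately from its partner, so the pair $(f_1, f_2)$ stays in a common face set until precisely the round where they are placed in opposite sub-sets; at that moment both conditions of the separating-chain definition are met. Combining this with the fact that recursion always terminates with singleton face sets yields that every edge of $P$ receives an edge cut, completing the proof.
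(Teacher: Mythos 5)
Your proof is correct and is essentially the paper's argument run in the forward direction: the paper argues by contradiction that an uncut edge would force its two adjacent faces to remain in a common face set, contradicting termination with singleton face sets, while you trace the edge to the first round where its two faces are split and verify it then lies on the separating chain. The extra detail you supply (the silhouette characterization and the bookkeeping that the face pair stays together until split) is exactly what the paper's terse contrapositive leaves implicit, so the two proofs rest on the same ideas.
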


\begin{proof}
Let $e$ be an edge that does not get an edge cut through it. 
Then the two adjacent faces of $e$ are in the same face set.
But that is a contradiction that each face set has only one face.
\qed
\end{proof}

%Therefore, the top ``cap''-like portion of each face $f$ can be cut by a single cut in cost of the area of $f$.

\subsubsection{Analysis}
We are now ready to find the approximation ratio and the running time of our algorithm.

We define the \emph{box area} of a face set $F$ as follows.
When $F$ contains all faces of $P$, its box area is $B$---the whole surface area of $B$.
Zone of cuts through the separating chain of $F$ partitions $F$ into $F_1$ and $F_2$
and $T$ into two components, say $T_1$ and $T_2$, respectively.
Then the \emph{box area} of $F_1$ ($F_2$) is the outer surface area of $T_1$ ($T_2$),
%(i.e., that is not attached to $F_1$ ($F_2$), 
which we denote by by $B_1$ ($B_2$).
Observe that $|B_1|+|B_2|\le |B|$.
Box area of any subsequent face set is similarly defined.
Moreover, two face sets from the same face round have their box areas disjoint
and in any face round sum of all box area is at most $|B|$.

%For any face set $F_m$ and the corresponding box area $B_m$,
%let $d_m$ be the length of maximum diameter of $B_m$.
%Clearly, $|B_m|\ge $ 

The following lemma bounds the cutting cost of an edge rounds in a particular face round.

\begin{lem}
Let ${P}_m'$ be the separating chain with $k$ edges of an arbitrary face set $F_m$ 
to which we apply $O(\log k)$ edge rounds.
Let $B_m$ be the box area of $F_m$.
At each edge round $j$, total cost of $2^j$ cuts is $O(|B_m|)$.
Over all $O(\log k)$ edge rounds, total cost is $O(|B_m|\log n)$.
\end{lem}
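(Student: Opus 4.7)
The plan is to bound the round-$j$ cost by slicing perpendicular to the zone direction and reducing to a 2D argument in each slice. I would set coordinates so the zone direction is the $z$-axis; then every cut plane from rounds $0,1,\ldots,j$ is vertical and tangent to $P$ along a silhouette edge of ${P}_m'$. The cost of cut $i$ equals the area of the planar polygon $\pi_i\cap U_i$, where $U_i$ is the sub-piece of the (partially cut) $T_m$ being split by cut $i$; by Fubini this area equals $\int_c\mathrm{length}(\pi_i\cap U_i\cap\{z=c\})\,dc$, so the total round-$j$ cost is $\int_c(\text{total chord length at level }c)\,dc$.

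Next I would establish the 2D bound at each level $c$. In the slice, $P$ restricts to a convex polygon $P_c\subseteq B_c$, the sub-pieces $U_i$ restrict to pairwise disjoint 2D regions partitioning (part of) $B_c\setminus P_c$ (disjointness follows from the verticality of all previous cut planes), and each round-$j$ chord is tangent to $\partial P_c$ at the slice point of the corresponding edge $e_i$. For each sub-region $V$, its two lateral sides are themselves tangent chords to $\partial P_c$ at the endpoints of $V$'s inner arc $\alpha_V$ (those were introduced by earlier-round cuts); by the convex-geometry fact that two tangents to a convex arc of length $L$ meet at distance $O(L)$ from their tangent points, the new chord inside $V$, pinched between these tangents, has length $O(|\alpha_V|)$. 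The few end sub-regions whose lateral side is $\partial B_c$ rather than a previous tangent chord only shorten their chord and contribute at most $O(\mathrm{perim}(B_c))$ in total. Summing over sub-regions at level $c$ gives total chord length $O(\mathrm{perim}(P_c))\le O(\mathrm{perim}(B_c\cap T_m))$, the last inequality being perimeter monotonicity of nested convex regions.

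Finally I would integrate in $c$ via the coarea identity
\[
\int_c\mathrm{length}\bigl(\partial B\cap T_m\cap\{z=c\}\bigr)\,dc\,=\,\int_{\partial B\cap T_m}\sqrt{1-(n\cdot\hat z)^2}\,dA\,\le\,|B_m|,
\]
giving total round-$j$ cost $O(|B_m|)$; summing over the $O(\log k)=O(\log n)$ edge rounds yields the total bound. The main obstacle is the 2D per-sub-region chord estimate: one needs a quantitative bound that is uniform over all geometric configurations, handling the near-degenerate cases where the two bounding previous-round tangents become nearly parallel to the new tangent, and separately accounting for the end sub-regions whose chords terminate against $\partial B_c$ rather than against a previous cut.
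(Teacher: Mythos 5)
Your slicing/Fubini setup is sound, but the core 2D estimate rests on a ``convex-geometry fact'' that is false: the tangent lines at the endpoints of a convex arc of length $L$ need \emph{not} meet within distance $O(L)$ of the tangent points. Take $a=(0,0)$, $b=(1,0)$ and a convex arc of length close to $1$ whose tangent directions at $a$ and $b$ make angle $\frac{\pi}{2}-\epsilon$ with the chord $\overline{ab}$ (the arc can hug the chord except for a sharp turn near each endpoint); the two endpoint tangents meet at a point $p$ at height $\Theta(1/\epsilon)$, so $|ap|+|pb|=\Theta(1/\epsilon)$ while $L=\Theta(1)$. Nor is this a pathological corner case in the cutting configuration: at any level $c$ where the slice $P_c$ is small compared with the slice of the box (e.g., levels near a vertex of $P$), consecutive previous-round tangent chords are long and nearly parallel, the inner arc $\alpha_V$ between their tangent points is tiny, and the new round-$j$ chord pinched between them can have length $\Theta(\mathrm{diam}(B_c))\gg|\alpha_V|$. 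So the per-sub-region bound, and with it your claimed $O(\mathrm{perim}(P_c))$ total at level $c$, genuinely fails; you correctly flag this as the main obstacle, and it is fatal as stated rather than a technicality to be cleaned up.

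The paper avoids the 2D reduction entirely and charges in 3D, exactly as in its Lemma~\ref{le:garbage1}: the piece removed by a round-$j$ cut lies entirely on the far side of a supporting plane of $P$, hence is (up to a lower-dimensional contact set) a convex body whose boundary consists of the new cut face together with a portion of the outer surface $B_m$ (and possibly slivers of earlier cut faces); orthogonal projection onto the cut plane then bounds the cost of the cut by the area of the outer surface it removes. The $2^j$ removed pieces of round $j$ are pairwise disjoint because any two consecutive round-$j$ cuts are separated by a round-$(j-1)$ cut, so the round costs at most $O(|B_m|)$, and the $O(\log k)=O(\log n)$ rounds cost $O(|B_m|\log n)$. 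If you want to keep your slice-wise picture, the repair is to charge each new chord not to the inner arc $\alpha_V$ of $\partial P_c$ but to the \emph{outer} boundary of its sub-region: the part of $V$ cut off by the new chord is convex, so the chord is no longer than the rest of that part's boundary, which lies on $\partial B_c$ and on the two neighbouring earlier-round chords; integrating in $c$ then reproduces the paper's charging argument instead of a bound in terms of $\mathrm{perim}(P_c)$.
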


\begin{proof}
This proof is similar to that of Lemma~\ref{le:garbage1}.
Consider a particular edge round $j$.
%When $j\ge 1$, there are $2^j$ cuts in this round.
For each cut $c$ the cost of $c$ is no more than the portion of $B_m$ that is thrown away by $c$.
Moreover, these cuts are pairwise disjoint, 
since they can at best intersect the cut which is in between them and was applied in $(j-1)$-th round.
%By Observation~\ref{ob:garbage2}, 
It implies that the total cost of $2^j$ cuts is at most $|B_m|$.
Since $k\le n$, the second part of the lemma follows.
\qed
\end{proof}

The next lemma bounds the total cutting cost over all face rounds.

\begin{lem}
%Let $F$ be the face set consisting of all faces of $P$ to which we apply $O(\log n)$ face rounds.
At each face round $i$, total cost of $2^i$ zones of cuts is $O(|B|\log n)$.
Over all $O(\log n)$ face rounds, the total cost is $O(C^*\log^2n)$.
\end{lem}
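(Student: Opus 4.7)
The plan is to combine the preceding edge-round lemma with the structural observation (recorded in the paragraph just before it) that the box areas of all face sets produced in a single face round are pairwise disjoint pieces of the outer surface of $T$ and therefore sum to at most $|B|$.

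First, I would fix a face round $i$. By the recursive halving construction there are at most $2^i$ face sets $F_1,\ldots,F_{2^i}$ with box areas $|B_1|,\ldots,|B_{2^i}|$ and with separating chains of lengths $k_1,\ldots,k_{2^i}\le n$. The previous lemma bounds the aggregated cost of the zone of cuts attached to each $F_m$ (over all its $O(\log k_m)$ edge rounds) by $O(|B_m|\log n)$. Summing over the face sets of round $i$ and invoking the disjointness observation yields
\[
\sum_{m=1}^{2^i} O(|B_m|\log n) \;=\; O(\log n)\cdot\sum_{m=1}^{2^i}|B_m| \;\le\; O(|B|\log n),
\]
which is exactly the first claim of the lemma.

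For the second claim, I would sum the per-round bound $O(|B|\log n)$ across the $O(\log n)$ face rounds to obtain $O(|B|\log^2 n)$, and then apply Lemma~\ref{le:lb} to replace $|B|$ by $6C^*$, giving $O(C^*\log^2 n)$.

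The only delicate point I expect is the justification that $\sum_m |B_m|\le |B|$ at every face round. One has to verify that when a zone of cuts of $F_m$ is applied, the new cut surfaces produced are inner faces of the components of $T$, while the outer portions of the resulting components $T_m$ remain disjoint subsets of the original surface $\partial B$, so that $\sum_m|B_m|$ never exceeds $|B|$ even across rounds. The authors record precisely this as an observation, so it can be invoked directly; once it is in hand, the rest of the argument is a clean two-level geometric-series summation.
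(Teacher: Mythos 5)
Your proposal matches the paper's proof essentially verbatim: the paper also sums the per-face-set bound $O(|B_m|\log n)$ from the preceding edge-round lemma, uses the recorded observation that $\sum_m |B_m|\le |B|$ within a round to get $O(|B|\log n)$, and then multiplies by the $O(\log n)$ face rounds before applying Lemma~\ref{le:lb}. No meaningful difference in approach.
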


\begin{proof}
At each face round $i$, we apply $2^i$ zones of cuts to $2^i$ face sets.
By the previous lemma, for a particular face set $F_m$, $0\le m\le 2^i$,
cost of the zone of cuts applied to it is at most $O(|B_m|\log n)$.
Since $\sum_{1}^{2^{i}} |B_m| \le |B|$, cost of all zone cuts is 
$\sum_{1}^{2^i} O(|B_m|\log n) = O(|B|\log n)$.
Over all $O(\log n)$ face rounds, the total cost is $O(|B|\log^2n)$,
which  by Lemma~\ref{le:lb} is $O(C^*\log^2n)$. 
\qed
\end{proof}

We now see the running time of our algorithm.
Running time in face round $i$ involves finding $2^i$ separating chains,
each of size  $\frac{n}{2^i}$, 
plus applying a zone of cuts to each of them. 
Each separating chain 
%has size at most the number of faces in a face set, which is $\frac{n}{2^i}$ and 
can be found in $O(\frac{n}{2^i}\log \frac{n}{2^i})$ time by Lemma~\ref{le:face_equal_partition}.
%While applying the zone of cuts to a separating chain, 
Each cut needs to update $Q$, which can be done in $O(n)$ time assuming that $Q$ is 
represented by suitable data structures~\cite{berg}.
It gives that a zone of cuts needs $O(\frac{n^2}{2^i})$ time.
So, in round $i$ total time is $ O(2^i(\frac{n^2}{2^i}+\frac{n}{2^i}\log \frac{n}{2^i}))=O(n^2)$.
Over all $O(\log n)$ rounds, it becomes $O(n^2\log n)$.

We summarize the result in the following theorem.

\begin{thm}
Given a convex polyhedron $P$ fixed inside a sphere $Q$, $P$ can be cut out of $Q$ by using only
guillotine cuts in $O(n^3)$ time with cutting cost $O(\log^2n)$ times the optimal,
where $n$ is the number of vertices of $P$.
\end{thm}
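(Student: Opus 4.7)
The plan is to assemble the pieces established in the preceding lemmas into a single end-to-end argument, since the theorem is a summary of the two-phase algorithm already analyzed. First, I would handle the box cutting phase: determine whether $P$ is cornered or centered; if cornered, apply the D-separation, which by Lemma~2 is found in $O(n)$ time and by Lemma~4 has cost at most $C^*$. Next, compute a minimum volume bounding box $B$ of $P$ in $O(n^3)$ time using O'Rourke's algorithm, and cut $B$ out of $Q$ with six face cuts. Lemma~5 guarantees that the total cost of all box cuts is at most $4C^*$ in either the cornered or centered case.

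Then I would invoke the carving phase. The last lemma before the theorem already bounds the total cost of all zones of cuts applied in the $O(\log n)$ face rounds by $O(|B|\log^2 n)$, which by Lemma~6 is $O(C^*\log^2 n)$. By the lemma ensuring that every edge of $P$ receives an edge cut, after all face and edge rounds the remaining material $T$ consists only of a thin ``cap'' sitting over each face $f$ of $P$. Each cap can be removed by a single face cut of cost $|f|$, and summing over the faces of $P$ contributes at most the surface area of $P$, which is bounded by $|B|\le 6C^*$ via Lemma~6. Hence the total cutting cost is dominated by the carving-phase term $O(C^*\log^2 n)$.

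For running time I would simply collect the bounds: the box cutting phase costs $O(n^3)$ (the bottleneck is O'Rourke's bounding box algorithm), while the carving phase, by the running time analysis preceding the theorem, is $O(n^2\log n)$. Removing the caps is $O(n)$ additional cuts handled in $O(n^2)$ time. The overall running time is therefore $O(n^3)$.

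The only point I expect to be slightly delicate is justifying the final cap-removal step, namely verifying that once all edge cuts through edges of $P$ have been applied, the residual material over each face of $P$ really is a single piece removable by one face cut of cost at most $|f|$, and that this does not create additional cost that escapes the $O(C^*\log^2 n)$ bound. The rest is bookkeeping, combining Lemmas 4--6 with the edge-round and face-round cost lemmas to conclude the stated approximation ratio and running time.
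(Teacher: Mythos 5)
Your proposal assembles the preceding lemmas in exactly the way the paper intends: box cutting via D-separation and O'Rourke's bounding box (cost $O(C^*)$, time $O(n^3)$), carving via face and edge rounds (cost $O(C^*\log^2 n)$, time $O(n^2\log n)$), and final cap removal at cost bounded by the surface area of $P$, hence $O(C^*)$ by Lemma~6. This matches the paper's own (implicit) proof, which simply summarizes the two-phase analysis; your flagged concern about the cap-removal step is legitimate but is glossed over in the paper as well.
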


\section{Conclusion}
\label{se:con}
In this paper, we have given an $O(n^3)$-time algorithm that cuts a convex polyhedron $P$
with $n$ vertices from a sphere $Q$, where $P$ is fixed inside $Q$, by using guillotine cuts
with cutting cost $O(\log^2 n)$ times the optimal.

This problem is well studied in 2D, where the series of results include several $O(\log n)$ 
and constant factor approximation algorithms and a PTAS.
The key ingredients of the 2D algorithms involve three major steps: 
(1) take some approximate vertex cuts through the vertices of $P$, 
(2) use dynamic programming to find an optimal cutting sequence among the edge cuts,
and the vertex cuts taken in step (1), and 
(3) show that the cutting cost of the sequence obtained in step (2)
is within the desired factor of the optimal.
Using the idea of 2D algorithms may be a way to improve the approximation ratio of our algorithm.
Among the above three steps, it may not be difficult to generalize steps (1) and (3) for 3D,
but the most difficult part we find is the applying a dynamic programming.

An immediate future work would be to find approximation algorithms when $Q$ is another convex polyhedron.
Recently, Ahmed et.al.~\cite{ABHK10} have studied a more generalized version of the problem in 2D
where the polygon $P$ is \emph{not fixed} inside a circle $Q$.
For that problem they have given several constant factor approximation algorithms. 
It would be interesting to study that version of the problem in 3D.

%\paragraph{Acknowledgment}
%We acknowledge the support of Bangladesh University of Engineering and Technology.

%\renewcommand{\baselinestretch}{.9}

\bibliographystyle{abbrv}

\bibliography{bib2_short}

\end{document}